\documentclass[%
aps,
pra,%
amsmath,amssymb,showpacs,
reprint,%
floatfix,
twocolumn,
nofootinbib,
longbibliography,
superscriptaddress
]{revtex4-1}
\usepackage[utf8]{inputenc} 
\usepackage[T1]{fontenc}
\usepackage{graphicx}
\usepackage{dcolumn}
\usepackage{bm}
\usepackage[justification=justified, format=plain]{caption}
\usepackage{subcaption}
\usepackage[normalem]{ulem}
\usepackage{physics}
\usepackage{mathtools}
\usepackage{amsmath,amsfonts,amssymb}
\usepackage{amsthm}
\usepackage[dvipsnames]{xcolor}
\usepackage{verbatim}
\usepackage{bbold}

\DeclareMathAlphabet\mathbfcal{OMS}{cmsy}{b}{n}

\usepackage[unicode=true,bookmarks=false,breaklinks=false,pdfborder={0 0 1},colorlinks=true]{hyperref}
\hypersetup{
 citecolor=blue,linkcolor=blue,urlcolor=blue}

\newtheorem*{lemma}{Lemma}
\theoremstyle{remark}
\definecolor{LightBlue}{rgb}{0,0.65,0.8}

\begin{document}

\title{Temporal modes of quantum states of light scattered by a two-level system}
\author{Yann Bouchereau}
\thanks{These authors contributed equally to the writing of this work.}
\author{Lucas Weitzel}
\thanks{These authors contributed equally to the writing of this work.}
\affiliation{Laboratoire Kastler Brossel, Sorbonne Université, CNRS, ENS-Université PSL, Collège de France, 4 place Jussieu, Paris F-75252, France}
\author{Valérian Thiel}
\affiliation{Pasqal, 24 Av. Emile Baudot, 91120 Palaiseau, France}
\author{Valentina Parigi}
\author{Mattia Walschaers}
\author{Nicolas Treps}
\affiliation{Laboratoire Kastler Brossel, Sorbonne Université, CNRS, ENS-Université PSL, Collège de France, 4 place Jussieu, Paris F-75252, France}

\date{\today}

\begin{abstract}
    The deterministic generation of non-Gaussian quantum states of light, of importance to quantum computing, is a challenging problem due to the difficulty to control nonlinearities in physical systems. In this work, we characterize the light stemming from one of the most fundamental quantum optics configurations: the unidirectional scattering of multimode and multiphoton light by a two-level system. We provide an analytic and explicit description of the output light solely in terms of the corresponding input temporal modes which allows a straightforward physical interpretation and is computationally more effective compared to numerical methods. Then, we focus on the specific case of the scattering of two photons in a single mode. By numerically decomposing the output state in terms of its principal modes, we find that it is possible to map single-mode two-photon inputs to be into two-mode entangled output states, i.e., two-photon NOON states, to very good approximation. The latter states, in turn, are known to have more Wigner negativity compared to the associated input, which ultimately suggests a potential application of our considered setup in the deterministic generation of non-Gaussian states.
\end{abstract}

\maketitle

\section{Introduction}

Non-Gaussian states of light are of fundamental importance in modern quantum optics, with significant implications for quantum computing. Nevertheless, their generation is challenging, and efforts in this direction have pushed developments in a multitude of platforms for quantum technologies. Commonly employed methods are photon addition and subtraction \cite{doi:10.1126/science.1103190,doi:10.1126/science.1122858, doi:10.1126/science.1146204}, breeding protocols \cite{PhysRevLett.114.193602}, and photon catalysis \cite{Lvovsky2002,Eaton_2019,ktc9-9rjb}. Importantly, these techniques effectively use measurements to implement nonlinear processes, which are necessary prerequisites for the generation of non-Gaussian states \cite{PRXQuantum.2.030204}.

Despite the remarkable results achieved by the aforementioned approaches, low success probabilities remain a major bottleneck for large-scale implementations. In other words, they cannot \textit{deterministically} generate a desired non-Gaussian state, precisely because their protocols are intrinsically probabilistic. This observation continues to fuel interest in alternative, deterministic methods for implementing non-Gaussian operations.

To that end, promising platforms are superconducting and trapped ions setups, in which the interplay between bosonic modes and a two-level system generally gives access to controllable nonlinearities \cite{Chang2014}. In the optical regime, this setting is harder to realize, but it has been demonstrated with Rydberg atoms in a cavity \cite{Rempe19,Margo23}, or by placing quantum emitters in the vicinity of waveguides \cite{Goban2014,PhysRevLett.117.133603}. While the basic physics of this setting can be captured through the Jaynes-Cummings model \cite{Jaynes1963}, the theory rapidly grows more complicated when we add more degrees of freedom. In particular, for a setting where the incoming light is no longer considered to be monochromatic, we are inevitably confronted with a significantly more complex multimode problem.

On the theory side, several recent works have focused on a paradigmatic nonlinear problem: The one-dimensional scattering of multimode light by a two-level system. Several models have been developed to study the spectral properties of quantum light in this setting. On the one hand, the method of \cite{Kiilerich2019, Kiilerich2020} introduces virtual cavities in combination with the input-output formalism. This approach is scalable, versatile \cite{lim2026optimizingwignernegativityscattering,copie2026DeterministicPrep}, and has found applications in circuit- \cite{Yang2025} and cavity-QED \cite{Elliott2024}, but it requires the use of purely numerical techniques. On the other hand, the collisional model of \cite{Maffei2022} has the advantage of being analytically tractable, but its recursive relations make it difficult to implement for certain classes of modes. 

In this work, we present an alternative, fully analytical treatment of the aforementioned problem. We use tools from multimode quantum optics \cite{fabreModesStatesQuantum2020}, which give a compact analytical solution to this problem when combined with the formalism of scattering theory. This allows us to perform systematic studies over the input modes and identify the relevant modes in the output. To make our analytical approach tractable, we develop the theory in the Fock basis to obtain expressions for an arbitrary (finite) number of photons. We illustrate the usefulness of our approach by conducting a case study of two-photon states that occupy different types of spectral modes.

This paper is organized as follows. In Sec.~\ref{sec:theoretical_background}, we present our physical setup of interest and briefly discuss the relevant concepts for the development of our approach. In Sec.~\ref{sec:output_state}, we introduce one of the main results of this work, which is an intuitive and compact way of describing a multiphoton and multimode quantum state of light scattered by a two-level system in a waveguide. In Sec.~\ref{sec:modal}, we present the mathematical tools needed to analyze these scattered states. Next, in Sec.~\ref{sec:two-photon}, our method is employed in the case of the scattering of two photons, for the analysis of the modal content of the photonic output state. Finally, in Sec.~\ref{sec:summary}, we conclude by summarizing our findings and give prospects for future applications of our approach.

\section{Theoretical background}
\label{sec:theoretical_background}

\subsection{Setup}
\subsubsection{Hamiltonian and temporal modes}
\label{sec:hamiltonian_setup}

The setup of interest consists in a single two-level system (TLS) coupled to a one-dimensional continuum of photonic modes. The dynamics of the whole system is governed by the Hamiltonian $H_{\rm tot} = H_{\rm ph} + H_{\rm at} + H_{\rm int}$, where
\begin{align}
    \label{H_ph}
    H_{\rm ph} &= \int_{-\infty}^{\infty} \dd k\;\omega_k a^\dag(k) a(k),\\
    \label{H_at}
    H_{\rm at} &= \omega_{\rm ge} \sigma_{\rm ee},\\
    \label{H_int}
    H_{\rm int} &= \sqrt{\frac{\Gamma}{4\pi}}\int_{-\infty}^\infty \dd k\;a(k) \sigma_{\rm eg} + \text{h.c.},
\end{align}
are, respectively, the photonic field, the two-level system, and the light-matter interaction Hamiltonians. Eqs.~(\ref{H_ph}--\ref{H_int}) are written in units such that $\hbar =c=1$, and we assume the dispersion relation to be $\omega_k=|k|$ for the photon frequency associated with momentum $k$. Furthermore, $\omega_{\rm ge}$ denotes the atomic transition frequency between the excited ($\ket{\rm e}$) and ground ($\ket{\rm g}$) states, $\sigma_{\rm ee}=\ketbra{\rm e}$, $\sigma_{\rm eg}=\ketbra{\rm e}{\rm g}$ and $\Gamma$ is the spontaneous emission rate of the TLS.

The ladder operators $a^\dagger(k)$, $a(k)$, employed in Eqs.~\eqref{H_ph} and~\eqref{H_int}, represent, respectively, the creation and the annihilation of a photon with wavenumber $k$. They obey the standard commutation relation
\begin{equation}
    \label{eq:commutation_momentum}
    [a(k), a^\dag(p)] = \delta(k-p).
\end{equation}

In this paper we will mostly work in the basis of ``time-localized'' photons. The associated annihilation operator is defined by the Fourier transform with the following convention:
\begin{equation}
    \label{eq:Fourier}
    a(t) \coloneq \int_{\mathbb{R}} \dd k\; e^{-i(k-\omega_{\rm ge})t}a(k),
\end{equation}
and obey a (time-domain) commutation relation analogous to Eq.~\eqref{eq:commutation_momentum}:
\begin{equation}
    \label{eq:commutation_time}
    [a(t), a^\dag(u)] = \delta(t-u),
\end{equation}
where, in the above, $t$ and $u$ are time variables. $a^\dagger(t)$ is thus the creation operator of a photon at the instant $t$.

Finally, it will also be useful to define ladder operators for a \textit{temporal mode} $f$ as \cite{fabreModesStatesQuantum2020}
\begin{align}
    \label{eq:mode_creation}
    a^\dag_{f} \coloneq \int_{\mathbb R} \dd u f(u) a^\dag(u).
\end{align}
Therefore, $a^\dag_{f}$ creates a photon in mode $f$ --- instead of at an instant $u$ -- which, in turn, can be any square-integrable normalized function (i.e. $\int_{-\infty}^{\infty}\dd u|f(u)|^2 = 1$). The commutation relation for the temporal mode ladder operators is simply given by the scalar product between the underlying temporal modes:
\begin{align}
    \label{eq:commutator_mode_creation}
    [a^{\strut}_{f}, a^\dag_{g}] = \int_{\mathbb R} \dd u f^*(u) g(u),
\end{align}
as it can be shown directly from Eq.~\eqref{eq:commutation_time}.

\subsubsection{Input and output states}
\label{sec:in_out_states}

We consider an incident $n$-photon, multimode, pure state of the (free) field, which propagates unidirectionally. This state is scattered by the TLS, whose initial and final states are both assumed to be the ground state. This setup is illustrated in Fig.~\ref{fig:setup} and can be thought of as describing, for example, an atom coupled to a chiral waveguide\footnote{We point out that the chiral waveguide goes beyond being just a toy model, and does correspond to realizable physical systems \cite{Haldane2008, Wang2008}. Furthermore, the forward and backward propagating solutions of the chiral system can be used to construct solutions for the associated non-chiral system \cite{Shen2015}.} \cite{Caneva2015}.

The input state of the field is thus written as \cite{fabreModesStatesQuantum2020}
\begin{align}
    \label{eq:input_state}
    \ket{\psi_{\rm in}} &=\mathcal N a^\dagger_{f_n}\ldots a^\dagger_{f_1}\ket{0}=\mathcal{N}\prod_{j=1}^n\left[\int_{\mathbb R}\dd u f_j(u)a^\dagger(u)\right]\ket{0},
\end{align} 
We emphasize the fact that, here and in the following, there are no additional assumptions about the modes $\{f_j\}$ apart from being square-integrable. Notably they can be identical (corresponding to a state with $n$-photons in a single temporal mode), orthogonal or have a non-zero overlap. The normalization factor $\mathcal N$ ensures that $|\!\braket{\psi_{\rm in}}\!|^2=1$. An explicit expression for $\mathcal N$ as a function of the modes $(f_1,\ldots,f_n)$ is given in Appendix~\ref{appendix:normalization}.

In the present work, we are interested in the representation of the free field's output state, that is, long after being scattered by the TLS. This state can be written as
\begin{align}
    \label{eq:output_state}
    \ket{\psi_{\rm out}} &= \frac{1}{\sqrt{n!}}\int_{\mathbb R^n} \dd^n\vb t\; \Psi_{\rm out}(\vb t)a^\dagger(t_n)\ldots a^\dagger(t_1)\ket{0},
\end{align}
where here and in the following the boldface letters denote an array of variables, i.e., $\vb t = (t_1,\ldots,t_n)$ for the above example, and the output state wavefunction is defined as
\begin{align}
    \label{eq:output_wave_definition}
    \Psi_{\rm out}(\vb t)\coloneq\frac 1{\sqrt{n!}}\bra{0}\prod_{j=1}^n a(t_j)\ket{\psi_{\rm out}},
\end{align}
Eqs.~\eqref{eq:output_state} and~\eqref{eq:output_wave_definition} are normalized such that $|\!\braket{\psi_{\rm out}}\!|^2=1$ and $\int_{\mathbb R^n} \dd^n\vb t |\Psi_{\rm out}(\vb t)|^2=1$ (for the derivation, see Appendix \ref{appendix:normalization}).
\begin{figure*}[t]
    \centering
    \includegraphics[trim={0cm 9cm 0 5cm},clip, width=0.7\linewidth]{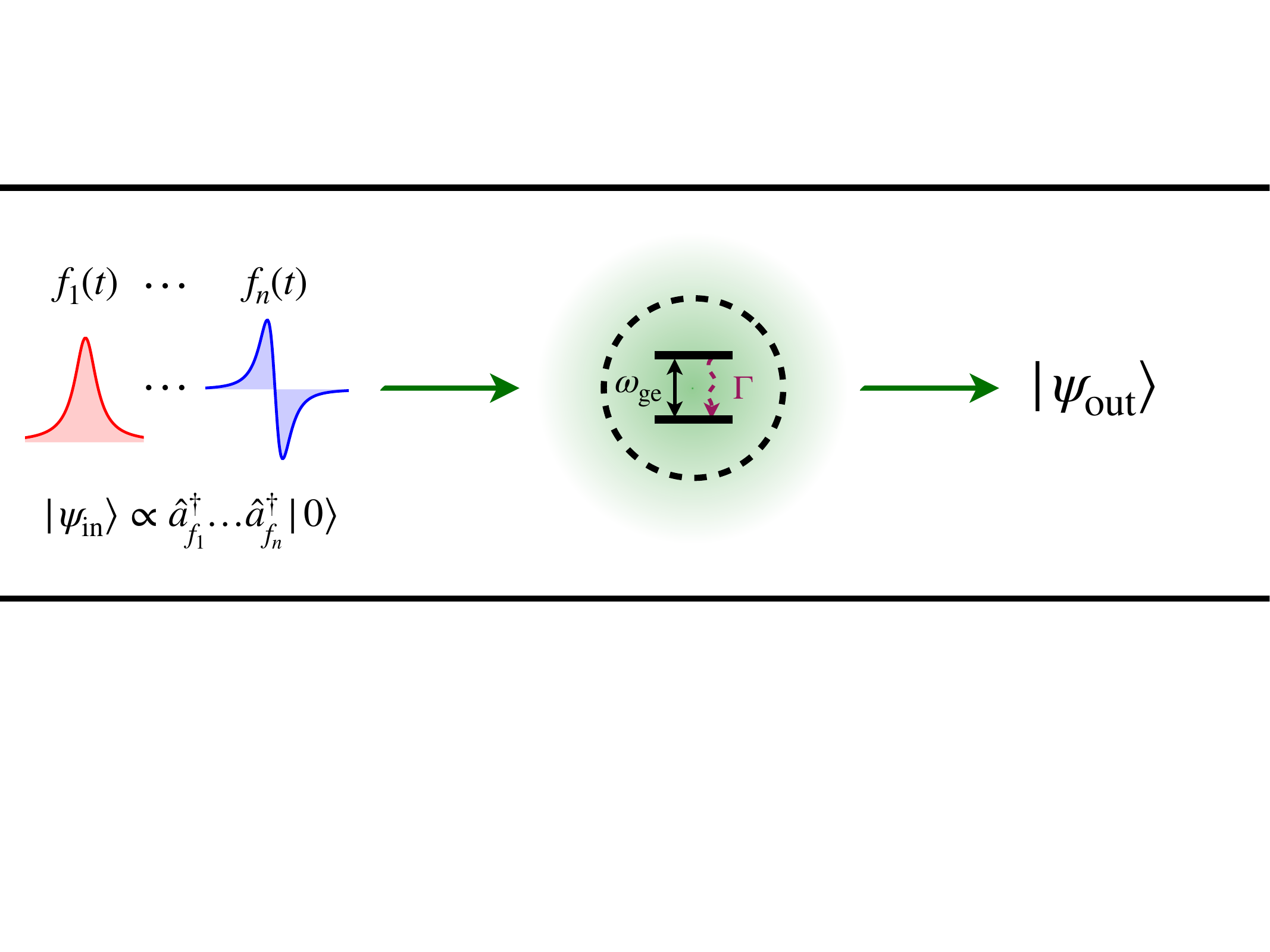}
    \caption{The setup of the considered physical system. A two-level system (TLS), with transition frequency $\omega_{\rm ge}$ and decay rate $\Gamma$, is placed within a chiral waveguide. A multiphoton incoming state of light $\ket{\psi_{\rm in}}$, prepared in the temporal modes $f_1(t),\ldots,f_n(t)$, interacts with the TLS --- here indicated by the green shade --- and evolves to the asymptotic output state $\ket{\psi_{\rm out}}$. The horizontal dimension of the figure should be interpreted as the spatial domain.}
    \label{fig:setup}
\end{figure*}

Our goal is to give an explicit expression of the output state and to provide a ``modal'' description of it. We give a more precise meaning of this idea in Sec.~\ref{sec:modal}. In order to investigate the description of scattered states, the scattering matrix is a natural tool. We therefore discuss its properties in the next subsection.

\subsection{Scattering matrix}
\label{sec:scattering_matrix}

\subsubsection{Definition and relation to the output state}

The scattering matrix (or $S$-matrix) is an operator $\mathcal S$ that maps an input state to a scattered output state, which, in our case, are given by Eqs.~\eqref{eq:input_state} and~\eqref{eq:output_state}, respectively. Mathematically, this relation reads \cite{Taylor2012}
\begin{align}
    \label{eq:application_scatter}
    \ket{\psi_{\rm out}}=\mathcal S\ket{\psi_{\rm in}},
\end{align}
and we define the corresponding matrix elements associated with the scattering of $n$ photons by
\begin{align}
    \label{eq:s_matrix_time}
    S^{(n)}_{(\vb t,\vb u)}\coloneq \bra{0}a(t_1)\ldots a(t_n)\mathcal Sa^\dagger(u_1)\ldots a^\dagger(u_n)\ket{0}.
\end{align}

From Eq.~\eqref{eq:application_scatter} and using Eqs.~\eqref{eq:input_state},~\eqref{eq:output_wave_definition} and~\eqref{eq:s_matrix_time}, we show in Appendix~\ref{appendix:normalization} that the output state wavefunction can be written in terms of the input temporal modes $\{f_j\}$ as
\begin{align}
    \label{eq:output_wavefunction}
    \Psi_{\rm out}(\vb t)=\frac{\mathcal{N}}{\sqrt{n!}}\int_{\mathbb R^n} \dd^n\vb u\; \left[\prod_{j=1}^n{f_j(u_j)}\right] S^{(n)}_{(\vb t,\vb u)}.  
\end{align}

As evidenced by Eq.~\eqref{eq:output_wavefunction}, we see that the scattering matrix plays a crucial role in the computation of $\Psi_{\rm out}(\vb t)$. For that reason, it will be relevant to introduce an explicit expression for $S^{(n)}_{(\vb{t},\vb{u})}$ which is already available in the literature, and discuss its physical interpretations.

\subsubsection{Analytical expression of $S^{(n)}$}

In \cite{Caneva2015}, the authors derive an analytical expression of the $S$-matrix coefficients, as defined in Eq.~\eqref{eq:s_matrix_time}, in time (and frequency) domain. We reproduce here the basic idea of this derivation. To that end, we observe that Eq.~\eqref{eq:s_matrix_time} implies the symmetry of $S^{(n)}_{(t_1,\ldots,t_n), (u_1,\ldots,u_n)}$ under any permutation of the indices $(t_1,\ldots,t_n)$ and $(u_1,\ldots,u_n)$, respectively. Therefore, it suffices to provide, in the following, the explicit expression for $S^{(n)}_{\vb t, \vb u}$ --- and similarly for $\Psi_{\rm out}(\vb t)$ --- with $\vb t$ and $\vb u$ in ascending order (i.e. $t_1\leq t_{2} \leq \ldots \leq t_n$), which leads to more compact expressions.

First, by Fourier-transforming and integrating the evolution equations of $a(k)$ under the total Hamiltonian $H_{\rm tot}$, we get the input-output equation \cite{Gardiner1985, Wallis1995, Fan2010input}
\begin{align}
    \label{eq:input_output}
    a_{\rm out}(t)=a_{\rm in}(t)-i\sqrt{\Gamma}\sigma_{\rm ge}(t).
\end{align}
The input operator $a_{\rm in}(t)$ is here equivalent to $a(t)$ just before the scattering event, while $a_{\rm out}(t)$ gives the evolution of $a_{\rm in}(t)$ right after the interaction.\footnote{Formally, $a^\dagger_{\rm in}(t)$ and $a^\dagger_{\rm out}(t)$ create scattering eigenstates of the \textit{total} Hamiltonian $H_{\rm tot}$, while $a^\dagger(t)$ creates an eigenstate of the (free) \textit{photonic} Hamiltonian $H_{\rm ph}$ [see Eq.~\eqref{H_ph}]. For a more in-depth discussion, the authors refer to \cite{Gardiner1985, Wallis1995, Fan2010input}.} The $S$-matrix elements can be subsequently expressed in terms of the input and output operators as \cite{Fan2010input}
 \begin{align}
    \label{eq:s_matrix_in_out}
    S^{(n)}_{(\vb{t},\vb{u})} = \bra{0}a^{\mathstrut}_{\rm out}(t_1)\ldots a^{\mathstrut}_{\rm out}(t_n) a_{\rm in}^\dag(u_1)\ldots a_{\rm in}^\dag(u_n) \ket{0}.
\end{align}

The next step is to plug \eqref{eq:input_output} into \eqref{eq:s_matrix_in_out} and to use the commutation relations satisfied by $a_{\rm in}, a_{\rm out}$. After a lengthy derivation, available in \cite{Caneva2015}, one finds that $S^{(n)}_{\vb t, \vb u}$ can be expressed only in terms of atomic operators as
\begin{align}
    \label{eq:scattering_matrix}
    S_{(\vb t,\vb u)}^{(n)}&=\sum_{m=0}^n  \sum_{I \in\mathcal P(m)} \sum_{J \in\mathcal P(m)}  i \mathcal{T}^{(m)}_{\vb t[I],\vb u[J]}\times \prod_{l =1}^{n-m} \delta(t_{\Bar I_l} - u_{\Bar J_l}),
\end{align}
where $\mathcal{P}(m)$ denotes the set of all subsets of $\{1,\ldots,n\}$ containing $m$ elements, and, for $I\in\mathcal{P}(m)$, $\Bar I = \{1,\ldots,n\} \setminus I$ is the complementary of $I$ in $\{1,\ldots,n\}$. With this, $\vb t[I]=(t_{I_1},\ldots,t_{I_m})$ denotes the array containing a certain subset of the time variables $\vb t$. Finally, $ i \mathcal{T}^{(m)}_{\vb t[I],\vb u[J]}$ is the time-ordered atomic correlation function of order $m$, defined as
\begin{align}
    \label{eq:connected_equation}
    i &\mathcal{T}^{(m)}_{\vb t[I],\vb u[J]}\nonumber\\
    &= \bra{0} T\left[\sigma_{\rm ge}(t_{I_1})\ldots \sigma_{\rm ge}(t_{I_m})\sigma_{\rm eg}(u_{J_1})\ldots \sigma_{\rm eg}(u_{J_m})\right]\ket{0},
\end{align}
where $T[\;\cdot\;]$ denotes time-ordering. In the case of $n$ photons coupled to a single TLS, Eq.~\eqref{eq:connected_equation} can be calculated explicitly: $ i \mathcal{T}^{(m)}_{\vb t[I],\vb u[J]}=1$ for $m=0$, and, otherwise,
\begin{widetext}
\begin{align}
    \label{eq:connected_s}
    i\mathcal T^{(m)}_{\vb t[I],\vb u[J]} = \left(-\Gamma\right)^m\left[\prod_{l=1}^{m}e^{-\Gamma(t_{I_l}-u_{J_l})/2}\right] \Theta\left(t_{I_m} >u_{J_m}>\ldots >t_{I_1}>u_{J_1}\right),
\end{align}
\end{widetext}
where we defined 
\begin{align}
    \Theta(x_n>x_{n-1}>...>x_1)=
    \begin{cases}
        1,\;\;\text{if}\;\;x_n>x_{n-1}>...>x_1,\\
        0,\;\;\text{otherwise}.
    \end{cases}
\end{align}

Physically, the delta factors in $S_{(\vb t,\vb u)}^{(n)}$ are associated with photons that are scattered without having interacted with the TLS. On the other hand, the $i\mathcal T^{(m)}_{\vb t[I],\vb u[J]}$ factors --- also referred to as the connected part of the scattering matrix \cite{Caneva2015, Xu2015, Shi2009} --- encompass the events where $m$ photons have interacted with the TLS and thus can have exchanged their momenta.

By exploiting Eqs.~\eqref{eq:scattering_matrix} and~\eqref{eq:connected_s}, we provide, in the next section, a computationally effective and conceptually intuitive expression for the general $n$-photon output wavefunction in the time domain, as defined in Eq.~\eqref{eq:output_wavefunction}. 

\section{Multiphoton output state in the time domain}
\label{sec:output_state}

In this section, we apply the tools presented previously to compute the $n$-photon output state wavefunction scattered by the TLS in the chiral waveguide. In order to gain intuition in the general case, we start by discussing the case of the single-photon scattering.

\subsection{Single photon case}

The single-photon scattering matrix in the time domain representation is obtained from Eqs.~\eqref{eq:scattering_matrix} and~\eqref{eq:connected_s} by setting $n=1$:\footnote{$S^{(1)}_{(t,u)}$ is equivalently recovered by applying the Fourier transform, as defined in Eq.~\eqref{eq:Fourier}, to the more familiar representation of the single-photon scattering matrix in momentum domain \cite{Wallis1995, Fan2010input, Xu2015}: $\tilde S^{(1)}_{(p,k)}=\delta(k-p)(k-i\Gamma/2)/(k+i\Gamma/2)$.}
\begin{align}
    S^{(1)}_{(t,u)} =\delta(t-u)-\Gamma e^{-\Gamma(t-u)/2}\Theta(t>u).
\end{align}
Therefore, using Eq.~\eqref{eq:output_wavefunction}, the scattered state's wavefunction of a single photon input in mode $f(t)$ is given by
\begin{align}
    \label{eq:single-photon-time}
    \Psi_{\rm out}(t)&=\int_{-\infty}^{\infty}\dd u\;f(u)S^{(1)}_{(t,u)}=f(t)+F(t,-\infty),
\end{align}
where we introduced
\begin{align}
    \label{eq:convoluted_mode}
     F(t,u) =-\Gamma \int_{u}^{t}\dd\tau\;e^{-\Gamma(t-\tau)/2} f(\tau).
\end{align}

In Eq.~\eqref{eq:single-photon-time}, we observe that the output state can be separated into two contributions: a non-interacting and a interacting part, $f(t)$ and $F(t,-\infty)$, respectively. Physically, the former is simply the same as the input mode, while the latter describes the contribution to the output state coming from the re-emission of the photon by the atom. This feature is compatible with results in the literature that have been derived via different methods, and for related setups \cite{Domokos2002, Roulet2016, Maffei2022, Dkabrowska2022}.

\subsection{Multiphoton case}
\label{sec:multiphoton}

Let us now turn to the scattering of $n$ photons. Again, the scattered state wavefunction is obtained by plugging Eq.~\eqref{eq:scattering_matrix} into Eq.~\eqref{eq:output_wavefunction}. This calculation is performed in detail in Appendix~\ref{appendix:derivation_general_formula}, where we consider the general case of $n$ different modes. In this subsection, though, we stick to the simpler case where all $n$ photons are in the same temporal mode $f$. The main result of this work is that the output wavefunction can be expressed as
\begin{align}
    \label{eq:output_n_photons}
    &\Psi_{\rm out}(\vb t)=\sum_{m=0}^n\sum_{I\in \mathcal P(m)}\prod^{m}_{j=1} F(t_{I_j},t_{I_{j-1}})\prod_{l=1}^{n-m} f(t_{\Bar I_l}),
\end{align}
with the convention $t_{I_0} \to -\infty$. As in Eqs.~\eqref{eq:scattering_matrix} and~\eqref{eq:connected_s}, the time ordering of the $\vb t[I]$ variables is implicit. The generalization of Eq.~\eqref{eq:output_n_photons} for different modes is straightforward, with the additional step of averaging over all permutations of the distinct input modes [for the derivation, see Appendix~\ref{appendix:derivation_general_formula}].

Despite its derivation not being immediate, Eq.~\eqref{eq:output_n_photons} has a straightforward physical interpretation. Indeed, each of the $n$ photons contributing to the output state may or may not have interacted with the TLS. If we choose $m$ the number of photons that interacted and $(I_1,\ldots,I_m)$ their indices, then the photons that did not interact are left unchanged, giving the $n-m$ factors $f(t_{\Bar I_l})$, while the $m$ photons that interacted give the factors of the form $F(t_{I_m},t_{I_{m-1}})$. The full output wavefunction is then simply given by the sum over all possible choices for $m$ and $I$. Finally $F(t_{I_m},t_{I_{m-1}})$ [Eq.~\eqref{eq:convoluted_mode}] is a convolution of the input temporal mode $f$ with the decaying exponential of the atomic response over a finite time interval, and thus corresponds to an interaction with the TLS between times $t_{I_{m-1}}$ and $t_{I_{m}}$.

With such an interpretation, Eq.~\eqref{eq:output_n_photons} explicitly demonstrates a peculiar feature of the scattering of $n$ photons by a TLS: the output wavefunction contains no term featuring simultaneous interaction of two or more photons with the TLS. Instead, only \textit{sequential} scattering of photons is allowed to happen.\footnote{Sequential photon scattering can also be understood as arising from destructive interference between the coherent and incoherent photon-scattering contributions to the output state that are associated to two or more photons \cite{Dalibard1983, Masters2023}.} This purely quantum phenomenon, sometimes referred to as photon anti-bunching, is, in fact, well-known in the literature: it has been theoretically deduced \cite{Stoler1974, Carmichael1976} and experimentally observed \cite{Kimble1977}, by the analysis of the photonic temporal second-order correlation function.

A detailed description of a multimode and multiphoton scattered state of light, however, was only investigated more recently, for setups slightly different from the one considered in this work, and with alternative approaches\footnote{In \cite{Roulet2016, Maffei2022}, the intermediate dynamics of the field and the atom were taken into account into the formalism. In our case, in contrast, the atomic dynamics is not calculated, and only the field's (time-asymptotically) initial and final states matter.} \cite{Roulet2016, PhysRevA.93.063807, Maffei2022}. Nevertheless, we believe our result presents the most compact way of describing the field's output state. Moreover, Eq.~\eqref{eq:output_n_photons} significantly reduces the complexity of calculating the former by bringing the number of integrals to be performed down to one, but evaluated at different times.

Finally, it is worth mentioning that, although Eq.~\eqref{eq:output_n_photons} holds for an input state with a fixed number of photons only, its generalization to any superposition of input states with different number of photons is straightforward. This property is due to the fact that the scattering matrix [Eq.~\eqref{eq:application_scatter}] keeps the photon number invariant. Consequently, a more generic output state can be written as a superposition of individual $n$-photon scattered states of the form of Eq.~\eqref{eq:output_state}. 

\section{Modal decomposition of the output state}
\label{sec:modal}

The formula we obtain for the output state [Eq.~\eqref{eq:output_n_photons}] is compact, analytical, and can be computed efficiently. However, the question of defining and extracting the relevant modes of the problem remains. 
In the present section, we briefly describe the framework of the modal decomposition of the output state, which will be necessary for the discussion of Sec.~\ref{sec:two-photon}. For a more in-depth discussion, we refer to \cite{fabreModesStatesQuantum2020}.

The aim of modal decomposition is to answer the following questions: What is the minimal number of modes necessary to fully describe a given multimode quantum state $\ket{\Psi}$, what are these modes and can we associate them with a direct physical interpretation? In the case of our $n$-photon multimode output state described by the wavefunction $\Psi_{\rm out}(\vb t)$, there always exists decompositions of the form
\begin{equation}
    \Psi_{\rm out}(\vb t) = \sum_{j=1}^{\mathcal M} \lambda_j \psi_{1,j}(t_1)\ldots \psi_{n,j}(t_n),
    \label{eq:gen_modal_decomp}
\end{equation}
for some integer $\mathcal M$, real parameters $\lambda_j$ and (not necessarily orthogonal) temporal modes $\psi_{i,j}$. In practice, the task of modal decomposition becomes finding a decomposition such as Eq 25 where $\mathcal M$ is minimal and the $\psi_{i,j}$ carry physical meaning.
In the general case, this problem is not well-defined and there is not a unique decomposition such as Eq.~\eqref{eq:gen_modal_decomp} with these properties, making modal decomposition a challenging task \cite{ktc9-9rjb,lopeteguigonzalez2025}.

For $n=2$ the problem is much easier, since the output state wavefunction depends on only two time variables and we can employ existing tools from linear algebra in our investigation by considering $\Psi_{\rm out}$ as a matrix. Here $\Psi_{\rm out}$ is complex symmetric, which means that we can obtain a reduction of the form of Eq~\eqref{eq:gen_modal_decomp} by performing the Takagi decomposition \cite{Chebotarev2014}:
\begin{align}
     \label{eq:eigendecomposition}
     \Psi_{\rm out}(t,t')=\sum_{j}\lambda_j\psi_j(t)\psi_j(t'),
\end{align}
where the $\lambda_j$ are real non-negative and the family $\{\psi_j\}$ is complex orthonormal. The temporal modes $\psi_j$ are called \textit{principal modes}.

This output wavefunction decomposition can be equivalently written in Dirac notation. Plugging Eq~\eqref{eq:eigendecomposition} into Eq~\eqref{eq:output_state} we readily arrive at
\begin{align}
    \label{eq:state_eigendecomposition}
    \ket{\psi_{\rm out}}=\frac 1 {\sqrt{2}} \sum_j\lambda_j(a^\dagger_{\psi_j})^2\ket{0}=\sum_j\lambda_j\ket{2_{\psi_j}}.
\end{align}
From Eq.~\eqref{eq:state_eigendecomposition}, we thus see that the principal modes form the basis in which the output state is described as a superposition of two-photon Fock states.

If we consider only the case where the input modes are real, the output wavefunction will be real as well and Eq~\eqref{eq:eigendecomposition} is simply the eigendecomposition of $\Psi_{\rm out}$. In this case,$\{\psi_j\}$ is real and the eigenvalues $\lambda_j$ can be negative, which physically corresponds to a $\pi$ phase-shift in the corresponding mode. 


Finally, an important quantity to be investigated is the effective number of modes, defined as \cite{fabreModesStatesQuantum2020}
\begin{align}
     \label{eq:mode_number}
     M\coloneq \frac{1}{\sum_j\lambda_j^4}.
\end{align}
This quantity gives the intrinsic number of modes that is necessary to describe the output state.

In the following, we will apply the concepts introduced in the present section to investigate the modal content in the scattering of two-photon states.

\section{Case study: two-photon input state}
\label{sec:two-photon}

In this section, we study two-photon input states where both incoming photons are in the same mode $f$. In this case the input state is \textit{single-mode} in the sense of Eq~\eqref{eq:mode_number}, meaning that $M=1$ for all input states in this section.

The two-photon output state wavefunction is hence recovered from Eq.~\eqref{eq:output_n_photons} by setting $n=2$, and reads\footnote{We recall that Eq.~\eqref{eq:output_two_photon} is valid only when $t\geq t'$, which was so far left implicit. To recover the wavefunction in the case when $t'\geq t$, one then has to swap the time arguments in the equation.}
\begin{align}
    \label{eq:output_two_photon}
    \Psi_{\rm out}(t,t')\big|_{t\geq t'}&=f(t) f(t')\nonumber\\
    &+f(t)F(t',-\infty)+f(t')F(t,-\infty)\nonumber\\
    &+F(t,t')F(t',-\infty).
\end{align}
Using Eq.~\eqref{eq:output_two_photon}, the output state decomposition and analysis is performed numerically for two different choices of input modes, as we discuss in the following.

\subsection{Lorentzian mode}
\label{sec:input_mode_analysis}

\subsubsection{Input mode}

In the momentum domain, the Lorentzian mode reads
\begin{align}
    \tilde f_L(p) = \sqrt{\frac\gamma\pi}\frac{\gamma/2}{(\gamma/2)^2+p^2},
\end{align}
where $\gamma$ is the Lorentzian's full width at half maximum. The mode corresponding to $\tilde f_L(p)$ in the time domain is a two-sided decaying exponential:
\begin{align}
    \label{eq:lorentzian_time}
     f_L(t)=\sqrt{\frac{\gamma}2}e^{-\gamma|t|/2}.
\end{align}
This mode shape, apart from leading to an explicit expression for the output state, is also of practical relevance. For example, it arises naturally as the single-photon output mode profile from cavity-based devices, such as, e.g., continuous-wave optical parametric oscillators \cite{Nielsen2007, Morin2013}.

\subsubsection{Output mode analysis}
\label{sec:output_mode_analysis}

By plugging Eq.~\eqref{eq:lorentzian_time} into Eq.~\eqref{eq:output_two_photon}, we obtain the output state's wavefunction for the Lorentzian input. In this case, the computation can even be performed analytically; the explicit expression for the output state wavefunction is given in Appendix~\ref{appendix:explicit_expression_lorentzian}. Then, with the full expression of the output state's wavefunction, we numerically compute the eigenvalues $\lambda_j$ and the effective number of output modes $M$ for different values of the ratio $\gamma/\Gamma$. These results are plotted in Fig.~\ref{fig:number_and_singular_values}.

\begin{figure}
    \centering
    \includegraphics[width=\linewidth]{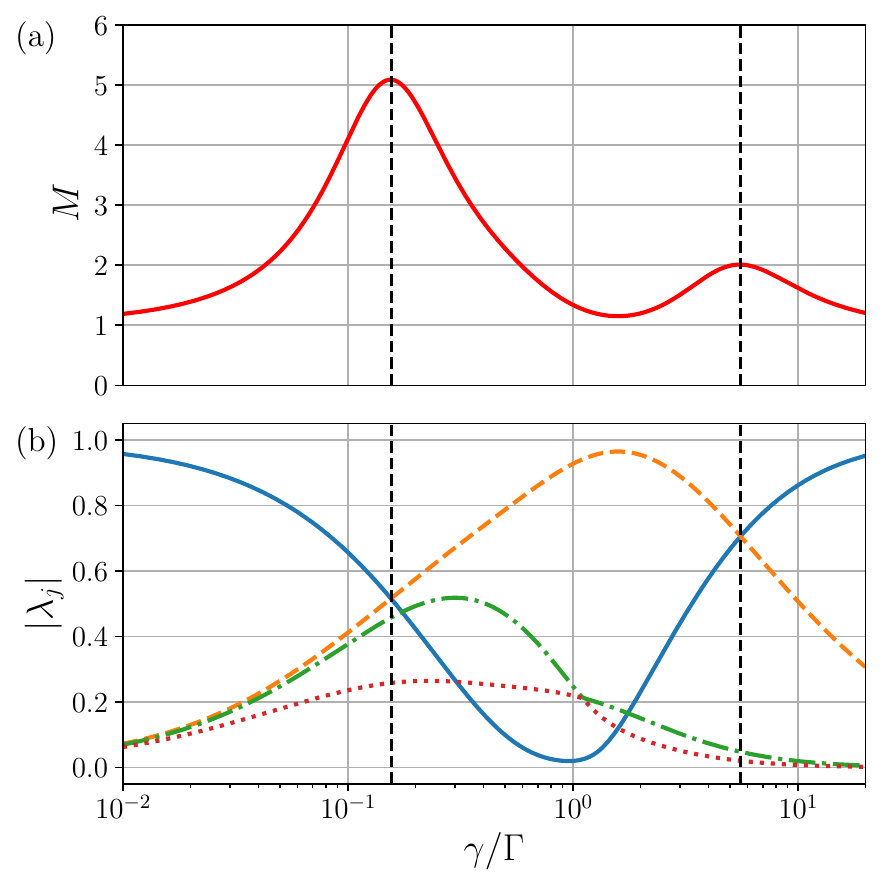}
    \caption{Analysis of the modal content of the output state for the Lorentzian input. (a): Effective number of modes $M$ [see Eq.~\eqref{eq:mode_number}] and (b): Moduli of the eigenvalues $\{\lambda_j\}$ for $j=1,2,3,4$ (in solid, dashed, dashed-dotted, and dotted lines, respectively), as a function of the Lorentzian width $\gamma$ in units of the TLS decay rate $\Gamma$, in logarithmic scale. The only positive eigenvalue is $\lambda_1$. The dashed vertical black lines highlight the local maxima of $M$, $M\approx 5.15$ and $M\approx 2$, which are located at $\gamma/\Gamma\approx0.15$ and  $\gamma/\Gamma\approx5.52$, respectively.}
    \label{fig:number_and_singular_values}
\end{figure}

As a consistency check, we first note that, in the limit $\gamma/\Gamma\to\infty$ --- i.e. when the temporal mode $f_L(t)$ is much narrower than the atomic lifetime --- the average number of modes goes to $1$. This behavior is expected since the interaction time between the input pulse and the TLS goes to zero in this limit, meaning that they do not interact.

In the other extreme, when $\gamma/\Gamma \to  0$, the input mode width is much broader than the atomic lifetime. Mathematically, this can also be interpreted as the atomic lifetime getting shorter and shorter. In this case, any photon absorbed by the TLS is immediately re-emitted, and thus the input mode shape is unchanged by the atom, again leading to $M\to 1$.

We observe that the above arguments, for both limiting cases, still hold for \textit{any} (square-integrable) temporal mode profile which is uniquely parametrized by a characteristic width. That is, for such types of modes, we still expect that $M\to 1$, when $\gamma/\Gamma \to  0$ and $\gamma/\Gamma \to  \infty$, for any number of photons.

The plots in Fig.~\ref{fig:number_and_singular_values} also reveal a remarkable feature. The number $M$ exhibits two local maxima located at $\gamma_1\approx 0.15\;\Gamma$ and $\gamma_2\approx 5.52\;\Gamma$, with values $M_1\approx 5.15$ and $M_2\approx 2.01$. It so happens that the peaks of $M$ are located exactly --- up to our numerical precision --- at values of $\gamma$ for which $\lambda_2=-\lambda_1$. These values are $\gamma_1\approx 0.15\;\Gamma$ and $\gamma_2\approx 5.52\;\Gamma$. This alignment is, in principle, coincidental, and we do not expect it to happen in general, as we discuss further in Appendix ~\ref{appendix:explicit_expression_gaussian}.
\begin{figure*}
    \centering
    \includegraphics[width=\linewidth]{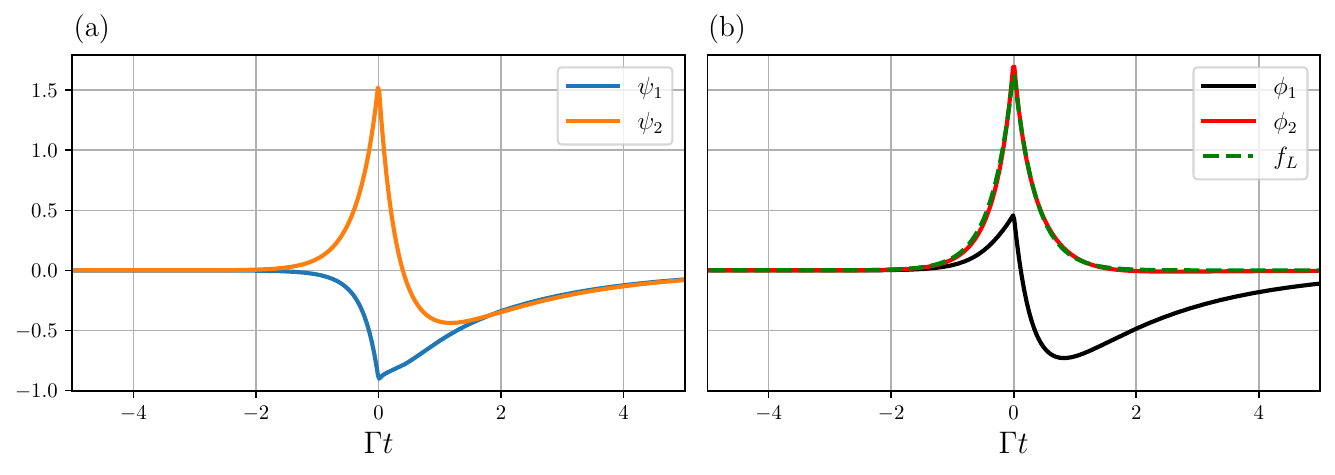}
    \caption{Different output modes for the Lorentzian input case [see Eq.~\eqref{eq:lorentzian_time}], choosing $\gamma=\gamma_2\approx5.55\; \Gamma$. In (a), we plot the first two principal modes $\psi_1$ and $\psi_2$, calculated numerically from Eq.~\eqref{eq:takagi_two_modes}. In (b), we depict the modes in the rotated basis, $\phi_1$ and $\phi_2$, which were obtained from Eqs.~\eqref{eq:rotated_basis_1} and~\eqref{eq:rotated_basis_2}, and compare them with the corresponding input mode $f_L$ [see Eq.~\eqref{eq:lorentzian_time}].}
    \label{fig:takagi_modes}
\end{figure*}

The second peak of $M$ is of special interest of this work. At the corresponding value of $\gamma$, not only the first two eigenvalues coincide in their moduli, but they are also much larger than all the other remaining eigenvalues. Thus at $\gamma_2$ the effective number of modes is very close to $2$ (with a deviation of $0.6\%$) and the diagonalization of the output state is well-approximated by only two modes, as
\begin{align}
    \label{eq:takagi_two_modes}
    \Psi_{\rm out}(t,t')\big|_{\gamma=\gamma_2}\approx\lambda[\psi_1(t)\psi_1(t')-\psi_2(t)\psi_2(t')],
\end{align}
with $\lambda=1/\sqrt 2$. We recognize in $\Psi_{\rm out}$ the wavefunction of a Hang-Ou-Mandel state, also known as two-photon NOON state. The principal modes $\psi_1$ and $\psi_2$ are plotted in Fig.~\ref{fig:takagi_modes}.

The form of Eq.~\eqref{eq:takagi_two_modes} further implies that one can define a rotated basis of modes, as
\begin{align}
    \label{eq:rotated_basis_1}
    \phi_1(t) &= \frac1{\sqrt 2}(\psi_1(t)+\psi_2(t))\\
    \label{eq:rotated_basis_2}
    \phi_2(t) &= \frac1{\sqrt 2}(\psi_1(t)-\psi_2(t)),
\end{align}
such that $\phi_1(t)$ and $\phi_2(t)$ decompose the wavefunction as
\begin{align}
    \label{eq:unsyn_two_modes}
    \Psi_{\rm out}(t,t')\big|_{\gamma=\gamma_2}\approx\lambda[\phi_1(t)\phi_2(t')+\phi_1(t')\phi_2(t)].
\end{align}

The modes $\phi_1(t)$ and $\phi_2(t)$ are also plotted in Fig.~\ref{fig:takagi_modes}. We observe on this figure that $\phi_2(t)\approx f_L(t)$ [see Eq.~\eqref{eq:lorentzian_time}], with the computed residual sum of squares amounting to $0.17\%$. Therefore, we conclude that, for $\gamma=\gamma_2$, the output state not only populates only two modes, but one of them is the same as the original mode.

We have so far written our analysis in terms of the output state wavefunction. The formulation of our results in Dirac notation is nevertheless straightforward. Let $\ket{n_{f}}=(n!)^{-1/2}(a^\dagger_f)^n\ket{0}$ denote the quantum state of $n$ photons in mode $f$ [see Eq.~\eqref{eq:mode_creation}]. Then for $\gamma=\gamma_2$, the scattering event is well-approximated by the following mapping 
\begin{align}
    \label{eq:hong_ou_mandel_state}
    \ket{2_{f_L}}\longrightarrow\frac1{\sqrt2}(\ket{2_{\psi_1},0}-\ket{0,2_{\psi_2}})=\ket{1_{\phi_1},1_{f_L}},
\end{align}
where we recover that a single-mode two-photon state is mapped into an entangled two-mode output
state, i.e. a two-photon NOON state.

Concerning the Wigner function, it is known that the state on the right-hand side of Eq.~\eqref{eq:hong_ou_mandel_state} has more Wigner negativity compared to the input state. Consequently, Eq.~\eqref{eq:hong_ou_mandel_state} depicts a deterministic generation, with a high degree of accuracy of states with non-Gaussian Wigner functions, which points towards a potential application in quantum computing.

\subsection{Gaussian mode}
\label{sec:other_modes}

We have also employed our approach to investigate the Gaussian temporal mode profile, which reads
\begin{align}
    \label{eq:gaussian_in}
    f_G(t)=\frac{\sqrt\sigma}{\pi^{1/4}}e^{-\sigma^2 t^2/2}.
\end{align}
The Gaussian input yields an output state which cannot be expressed in terms of elementary functions. Still, the same analysis as in Sec.~\ref{sec:output_mode_analysis} can be carried out, and our results for the Gaussian input mode reveal to be analogous to the ones obtained for the Lorentzian. For this reason, only the main findings are given in this section, whereas the details can be found in Appendix~\ref{appendix:explicit_expression_gaussian}.

By numerically diagonalizing the output state corresponding to the Gaussian input, we again computed the effective number of modes $M$ as a function of the width $\sigma$. We found that $M$ also exhibits two local maxima, $M\approx5.56$ for $\sigma\approx 0.09\;\Gamma$, and $M\approx 2$ for $\sigma\approx2.59\;\Gamma$. In particular, the second maximum differs from $2$ by only $0.35\%$, so that for $\sigma\approx 2.6\;\Gamma$ the output state admits the same decompositions as in Eqs.~\eqref{eq:takagi_two_modes} and~\eqref{eq:unsyn_two_modes}. Furthermore, a numerical comparison revealed that, just like with the Lorentzian case, $\phi_2(t)\approx f_G(t)$, this time with an error of $0.12\%$, thus providing an even better approximation compared to the Lorentzian case.

Finally these results can be condensed in a completely analogous diagram in Dirac notation as in Eq.~\eqref{eq:hong_ou_mandel_state}. Our results show that the latter is an even better approximation for the scattering event of the Gaussian, compared with the Lorentzian case.
These findings are consistent with what has been found in \cite{Lund2023}, where an alternative --- fully numerical --- method was used \cite{Kiilerich2019, Kiilerich2020}.

\section{Summary and outlook}
\label{sec:summary}

In this work we addressed the problem of the scattering of a multiphoton and multimode state of light by a two-level system in a chiral waveguide from a modal perspective. Importantly, we found an explicit way to describe the photonic output state solely in terms of the input modes. This description has a straightforward physical interpretation, due to its connection with the scattering matrix formalism \cite{Caneva2015}, and significantly simplifies the computations compared to other available methods \cite{Maffei2022, Kiilerich2019, Kiilerich2020}, especially for the case of $n>2$ photons.

From our method, we further investigated the modal decomposition of the output state in the two-photon scattering case, for two distinct input modes. Our results reveal that the input modes can be fine-tuned in such a way that the scattering event deterministically maps the two-photon single-mode incoming state to an entangled state of two photons populating two orthogonal modes, to a high degree of accuracy. While this property holds for both the Lorentzian and Gaussian profiles, we observe that the latter yields a better approximation than the former. This property suggests a potential ``mode optimization'' protocol, i.e., choosing the input mode such that the output state yields an even better approximation of the aforementioned entangled state. 

Finally, we expect our analytical expression for the output wavefunction to help investigate the modal content of generic output states, and in particular states with higher photon number. These states would be useful in understanding the scattering of experimentally relevant Gaussian input states --- such as squeezed states. The direct modal analysis of these scattered states can thus provide insights for deterministic generation of non-Gaussian states.

\section{Acknowledgements}

The authors thank D.E. Chang, A. Browaeys, A. Gavalda, I. Ferrier-Barbut, H. Le Jeannic, T. Copie and V. Shatokhin for valuable discussions. The authors gratefully acknowledge
support from the European Union under EIC Pathfinder Grant No. 101115420 (PANDA). This work was supported through the action France 2030 from Agence Nationale de la Recherche, projects NISQ2LSQ (ANR-22-PETQ-0006) and OQULUS (ANR-22-PETQ-0013). 

\appendix

\section{Normalization}
\label{appendix:normalization}

In this appendix we derive an explicit expression for the normalization factor $\mathcal N$ of the input state, appearing in Eq.~\eqref{eq:input_state}, and deduce Eq.~\eqref{eq:output_wavefunction}, regarding the wavefunction of the output state.

\subsection{Normalization of the input state}

We determine the factor $\mathcal N$ in Eq.~\eqref{eq:input_state} by imposing the input state $\ket{\psi_{\rm in}}$ to be normalized. That is, we compute
\begin{widetext}
\begin{align}
    \braket{\psi_{\rm in}}{\psi_{\rm in}} = \mathcal{N}^2 \int_{\mathbb R^n}\dd^n\vb t\int_{\mathbb R^n}\dd^n\vb u\;\prod_{j=1}^nf_j(u_j)f^*_j(t_j) \bra{0}a(t_1)\ldots a(t_n) a^\dag(u_1)\ldots a^\dag(u_n) \ket{0}=1.
\end{align}
\end{widetext}

One can then simplify the above expression by using the commutation relations between the ladder operators [see. Eq.~\eqref{eq:commutation_time}]. This manipulation yields a product of delta-functions \cite{Weinberg1995}, which, in turn, can be easily integrated over the $\{t_j\}$ variables. Finally, solving for $\mathcal N$, one finds
\begin{align}
    \label{eq:normalization}
    \mathcal{N} = \frac{1}{\sqrt{\sum\limits_{\sigma\in \mathfrak{S}_n}\prod\limits_{j=1}^n \langle f_{\sigma(j)}, f_j \rangle}},
\end{align}
where $\langle f, g \rangle$ is the scalar product between $f$ and $g$:
\begin{align}
    \langle f, g \rangle \coloneq\int_{\mathbb R} \dd u\;f^*(u) g(u).
\end{align}
We thus see that if all the $f_j$ are orthogonal, $\mathcal{N} = 1$, whereas if all the $f_j$ are equal --- i.e., $n$ photons in the same mode --- then $\mathcal{N} = 1/{\sqrt{n!}}$.

\subsection{Output state wavefunction}

The output state is obtained by applying the scattering matrix to the input state, as given by Eq.~\eqref{eq:application_scatter} in the main text. Using the definition of our input state, given by Eq.~\eqref{eq:input_state}, we have that
\begin{align}
    &\ket{\psi_{\rm out}} = \mathcal S\ket{\psi_{\rm in}}\nonumber\\
    &=\mathcal{N}\int_{\mathbb R^n}\dd^n\vb u \left[\prod_{j=1}^n f_j(u_j)\right]\mathcal Sa^\dagger(u_1)\ldots a^\dagger(u_n)\ket{0}.
\end{align}

Next, we insert in the above result, the identity operator in the form
\begin{align}
    \mathbb I=\frac 1{n!}\int_{\mathbb R^n}\dd^n\vb t\;a^\dag(t_1)\ldots a^\dag(t_n) \ketbra{0}a(t_n)\ldots a(t_t),
\end{align}
where the $1/n!$ factor accounts for the $n!$ permutations of the (complete and orthonormal) basis $a^\dag(t_1)\ldots a^\dag(t_n) \ket{0}\big|_{t_n\geq t_{n-1}\geq\ldots \geq t_1}$. 
We thus have
\begin{align}
    &\ket{\psi_{\rm out}}=\frac{\mathcal{N}}{n!}\times\nonumber\\&\int_{\mathbb R^n}\dd^n\vb t\int_{\mathbb R^n}\dd^n\vb u \left[\prod_{j=1}^n f_j(u_j)\right] S^{(n)}_{(\vb t, \vb u)}a^\dag(t_1)\ldots a^\dag(t_n) \ket{0},
\end{align}
where we employed the definition of the time domain representation of the scattering matrix [see Eq.~\eqref{eq:s_matrix_time}].

Finally, to recover the output state wavefunction, we simply use its definition, as given in Eq.~\eqref{eq:output_wave_definition}, and we find
\begin{equation}
    \Psi_{\rm out}(\vb t)=\frac{\mathcal{N}}{\sqrt{n!}}\int_{\mathbb R^n} \dd^n\vb u\; f_1(u_1)\ldots f_n(u_n) S^{(n)}_{(\vb t,\vb u)},
\end{equation}
where the factor $\mathcal{N}/\sqrt{n!}$ ensures that the normalization condition $\int_{\mathbb R^n} \dd^n\vb t\; |\Psi_{\rm out}(\vb t)|^2 = 1$ is fulfilled. We note that $\Psi_{\rm out}(\vb t)$ is always symmetric with respect to any permutation of $\vb t$ because $S^{(n)}_{(\vb{t},\vb{u})}$ is so.

\section{General formula for $n$-photon output state}
\label{appendix:derivation_general_formula}

In order to compute the output state wavefunction, we start from~\eqref{eq:output_wavefunction}, which we repeat below for clarity,
\begin{align}
    \label{eq:output_state_appendix}
    \Psi_{\rm out}(\vb t) = \frac{\mathcal N}{\sqrt{n!}}\int\limits_{\mathbb{R}^n}\dd^n\vb u f_1(u_1)\ldots f_n(u_n) S^{(n)}_{(\vb t, \vb u)}.
\end{align}
The whole calculation then amounts to plug the expression for $S^{(n)}_{(\vb t, \vb u)}$, given by Eq.~\eqref{eq:scattering_matrix}, in the above equation and integrate over the $\vb u$ variables.

However, Eq.~\eqref{eq:scattering_matrix} is valid only when $\vb u$ is ordered. We thus split the integral in Eq.~\eqref{eq:output_state_appendix} according to the ordering of $\vb u$, as
\begin{align}
    \label{eq:output_ordered_time}
    &\Psi_{\rm out}(\vb t) = \frac{\mathcal N}{\sqrt{n!}}\sum_{\sigma \in\mathfrak{S}_n}\int_{u_{\sigma(1)}\leq \ldots \leq u_{\sigma(n)}}\mkern-48mu\dd^n\vb u\;f_1(u_1)\ldots f_n(u_n) S^{(n)}_{(\vb t, \vb u)}\nonumber\\
    &=\frac{\mathcal N}{\sqrt{n!}} \sum_{\sigma \in\mathfrak{S}_n}\int_{u_{1}\leq \ldots \leq u_{n}}\mkern-36mu\dd^n\vb u\; f_{\sigma(1)}(u_{1})\ldots f_{\sigma(n)}(u_{n}) S^{(n)}_{(\vb t, \vb u)},   
\end{align}
where $\sigma$ denotes a given permutation in the symmetric group $\mathfrak S_n$ of $n$ elements. In the above, the second line is obtained after reordering the $\{f_j\}$ functions, relabeling the integration variables, and using the fact that $S^{(n)}_{(\vb t,\vb u)}$ is symmetric with respect to the permutations in $\vb u$.

With the result in the form of Eq.~\eqref{eq:output_ordered_time}, we are now able to employ expressions in Eqs.~\eqref{eq:scattering_matrix} and~\eqref{eq:connected_s}. We get
\begin{widetext}
\begin{align}
    \label{eq:derivation1}
    &\Psi_{\rm out}(\vb t)\nonumber\\ 
    =&\frac{\mathcal N}{\sqrt{n!}}\sum_{m=0}^n  \sum_{I,J\in \mathcal P(m)}\sum_{\sigma \in\mathfrak{S}_n}\int_{u_{1}\leq \ldots \leq u_{n}}\mkern-36mu \dd^n\vb u f_{\sigma(1)}(u_{1})\ldots f_{\sigma(n)}(u_{n}) \prod_{l=1}^m\left[-\Gamma e^{-\Gamma(t_{I_l} - u_{J_l})/2} \Theta(t_{I_l} \geq u_{J_l} \geq t_{I_{l-1}})\right]\cdot \prod_{l=1}^{n-m}\delta(t_{\Bar I_l} - u_{\Bar J_l})\nonumber \\
    =&\frac{\mathcal N}{\sqrt{n!}} \sum_{\sigma \in\mathfrak{S}_n}\sum_{m=0}^n  \sum_{I\in \mathcal P(m)}\sum_{J\in \mathcal P(m)}\int_{u_{1}\leq \ldots \leq u_{n}}\mkern-36mu  \dd^n\vb u  \prod_{l=1}^m\left[-\Gamma f_{\sigma(l)}(u_{J_l})e^{-\Gamma(t_{I_l} - u_{J_l})/2} \Theta(t_{I_l} \geq u_{J_l} \geq t_{I_{l-1}})\right]\cdot \prod_{l=1}^{n-m}f_{\sigma(l+m)}(u_{\Bar J_l})\delta(t_{\Bar I_l} - u_{\Bar J_l})\nonumber \\
    =&\frac{\mathcal N}{\sqrt{n!}} \sum_{\sigma \in\mathfrak{S}_n}\sum_{m=0}^n\sum_{I\in \mathcal P(m)} L_{\sigma,I},
\end{align}
where the second equality is obtained from the first one by relabeling the permutations $\sigma$ for each $J$. We also introduced the object $L_{\sigma,I}$, which encompasses all terms in the products, integrals and the summation over $I\in\mathcal P(m)$.

The heart of our result is the following
\begin{lemma}
\label{onlylemma}
    $$L_{\sigma,I} = \prod_{l=1}^m F_{\sigma(l)}(t_{I_l}, t_{I_{l-1}})\cdot \prod_{l=1}^{n-m} f_{\sigma(l+m)}(t_{\Bar I_l}).$$
\end{lemma}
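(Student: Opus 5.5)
The plan is to evaluate $L_{\sigma,I}$ directly from the last line of Eq.~\eqref{eq:derivation1}. For fixed $\sigma$, $m$ and $I$ it reads
\begin{align*}
L_{\sigma,I}=\sum_{J\in\mathcal P(m)}\int_{u_1\leq\ldots\leq u_n}\mkern-36mu\dd^n\vb u\;\prod_{l=1}^m\Big[-\Gamma f_{\sigma(l)}(u_{J_l})e^{-\Gamma(t_{I_l}-u_{J_l})/2}\Theta(t_{I_l}\geq u_{J_l}\geq t_{I_{l-1}})\Big]\prod_{l=1}^{n-m}f_{\sigma(l+m)}(u_{\Bar J_l})\delta(t_{\Bar I_l}-u_{\Bar J_l}).
\end{align*}
Because the permutation was already relabeled for each $J$, the integrand depends on $J$ only through \emph{which} integration variables carry which factor. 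The idea is to rename variables inside each $J$-term so that every term has the same integrand, with only the integration domain depending on $J$. Summing over $J$ should then glue these domains into a product of intervals, on which the integral factorizes.

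\textbf{Step 1: common variables.} In the $J$-term, rename $u_{J_l}\to v_l$ for $l=1,\ldots,m$ and $u_{\Bar J_l}\to w_l$ for $l=1,\ldots,n-m$, with $J$ and $\Bar J$ listed in increasing order. The integrand becomes
\begin{align*}
G(\vb v,\vb w)=\prod_{l=1}^m\Big[-\Gamma f_{\sigma(l)}(v_l)e^{-\Gamma(t_{I_l}-v_l)/2}\Theta(t_{I_l}\geq v_l\geq t_{I_{l-1}})\Big]\prod_{l=1}^{n-m}f_{\sigma(l+m)}(w_l)\delta(t_{\Bar I_l}-w_l),
\end{align*}
which is independent of $J$. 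The domain becomes $D_J$: the set of $(\vb v,\vb w)$ such that, when the $v$'s are placed at positions $J$ and the $w$'s at positions $\Bar J$, the resulting $n$-tuple is non-decreasing. Thus $L_{\sigma,I}=\sum_J\int_{D_J}G$.

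\textbf{Step 2 (the main obstacle): $\bigcup_J D_J$ covers the relevant region exactly once.} Consider the support of $G$, intersected with the requirement that both $\vb v$ and $\vb w$ are non-decreasing. On this set the $v$'s are automatically ordered, since $v_l\in[t_{I_{l-1}},t_{I_l}]$ and these intervals are consecutive because $\vb t[I]$ is ordered. Likewise the $w$'s are ordered, since the deltas force $w_l=t_{\Bar I_l}$ and $\vb t$ is ordered.

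Given such a point, sorting the combined list $(v_1,\ldots,v_m,w_1,\ldots,w_{n-m})$ while preserving the internal orders determines a unique interleaving, i.e. a unique $J$. Hence the sets $D_J$ partition the support of $G$, up to ties $v_l=w_k$, which form a set of measure zero for the $v$-integration. This gives
\begin{align*}
L_{\sigma,I}=\int_{\mathbb R^m}\dd^m\vb v\int_{\mathbb R^{n-m}}\dd^{n-m}\vb w\;G(\vb v,\vb w).
\end{align*}
This is the step requiring the most care. One point to check is that the relabeling of $\sigma$ in Eq.~\eqref{eq:derivation1} really was made so that $f_{\sigma(l)}$ is attached to the $l$-th interacting slot and $f_{\sigma(l+m)}$ to the $l$-th non-interacting slot for every $J$; I would re-derive this bijection on $\mathfrak S_n$ explicitly. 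The other point is that the Heaviside boundaries (strict versus non-strict inequalities) and the coincident-time sets are all null sets.

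\textbf{Step 3: factorization.} Since $G$ is a product of one-variable factors and the domain is now a full product space, Fubini applies. Each $w_l$-integral collapses through its delta to $f_{\sigma(l+m)}(t_{\Bar I_l})$. Each $v_l$-integral is
\begin{align*}
-\Gamma\int_{t_{I_{l-1}}}^{t_{I_l}}\dd\tau\;e^{-\Gamma(t_{I_l}-\tau)/2}f_{\sigma(l)}(\tau)=F_{\sigma(l)}(t_{I_l},t_{I_{l-1}}),
\end{align*}
by the definition \eqref{eq:convoluted_mode} applied to the mode $f_{\sigma(l)}$, with the convention $t_{I_0}\to-\infty$. Taking the product of these factors yields exactly the expression in the Lemma.
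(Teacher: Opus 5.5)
Your proof is correct and follows essentially the paper's argument. The paper also relabels via the shuffle map $\mu_J$ and uses the ordering of $\vb t$ in the Heaviside and Dirac factors to glue the ordered domains into all of $\mathbb R^n$ before factorizing; it merely phrases the gluing by padding the sum over $J$ to all of $\mathfrak{S}_n$ (non-shuffle orderings vanish), whereas you show directly that the shuffle domains $D_J$ tile the support. Your check of the earlier $\sigma$-relabeling is not needed for the lemma itself, since it takes $L_{\sigma,I}$ as defined in Eq.~\eqref{eq:definition_L}.
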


\begin{proof}
From Eq.~\eqref{eq:derivation1}, $L_{\sigma,I}$ is defined as
\begin{align}
    \label{eq:definition_L}
    L_{\sigma,I} \coloneq \sum_{J\in \mathcal P(m)}\int_{u_{1}\leq \ldots \leq u_{n}}\mkern-36mu  \dd^n\vb u  \prod_{l=1}^m\left[-\Gamma f_{\sigma(l)}(u_{J_l})e^{-\Gamma(t_{I_l} - u_{J_l})/2} \Theta(t_{I_l} \geq u_{J_l} \geq t_{I_{l-1}})\right]\cdot \prod_{l=1}^{n-m}f_{\sigma(l+m)}(u_{\Bar J_l})\delta(t_{\Bar I_l} - u_{\Bar J_l}).
\end{align}
The main idea of the proof is then  to suitably relabel the indices in the expression of $L_{\sigma, I}$ in order to recover an integral over the whole $\mathbb R^n$ space. From that, the full integral over $\vb u$ can then be separated as a product of individual integrals over $u_l$, and the definition of Eq.~\eqref{eq:convoluted_mode} can be employed. 

We start by observing that the index map $l\to J_l$ acts as a permutation of the integration variables $\{u_l\}$. To make this parallel more explicit, we represent the latter map with the following function: for $J\in \mathcal P(m)$, let
\begin{align}
    \mu_J : l\in\{1,\ldots,n\} \mapsto \mu_J(l) =
    \begin{cases}
        J_l &\text{ if }l\leq m \\
        \Bar J_{l-m} &\text{ else.}
    \end{cases}
\end{align}
which is well-defined, since we took the convention that the elements of $J$ and $\bar J$ are in ascending order. Hence, we rewrite $L_{\sigma, I}$ in terms of the newly defined notation by replacing $J_l\to\mu_J(l)$ and $\Bar J_l\to \mu_J(l+m)$ in the indices of the $\vb u$ variables. We get
\begin{align}
    \label{eq:L}
    L_{\sigma,I} = \sum_{J\in \mathcal P(m)}\int_{u_{1}\leq \ldots \leq u_{n}}\mkern-36mu  \dd^n\vb u\;\;  \prod_{l=1}^m&\left[-\Gamma f_{\sigma(l)}(u_{\mu_J(l)})e^{-\Gamma(t_{I_l} - u_{\mu_J(l)})/2} \Theta(t_{I_l} \geq u_{\mu_J(l)} \geq t_{I_{l-1}})\right]\nonumber\\\times \prod_{l=1}^{n-m}&f_{\sigma(l+m)}(u_{\mu_J(l+m)})\delta(t_{\Bar I_l} - u_{\mu_J(l+m)})
\end{align}

The convenience of employing the above index relabeling relies on the fact that, when $J$ spans over $\mathcal P(m)$, $\mu_J$ spans over all the permutations of $\{1,\ldots,n\}$ whose first $m$ elements and last $n-m$ elements are in ascending order, individually. Crucially, for any permutation \textit{other} than the one specified by $\mu_J$, the integral over the domain $u_{1}\leq \ldots \leq u_{n}$ \textit{vanishes}, due to the ordering of $\vb t$ in the Dirac and Heaviside functions in the integrand in Eq.~\eqref{eq:L}. As a result, we can replace the sum over $J$ by a sum over all permutations $\mu_J\in\mathfrak S_m$, without affecting the final result, that is 
\begin{align}
    L_{\sigma,I}=\sum_{\mu \in \mathfrak{S}_n}\int_{u_{1}\leq \ldots \leq u_{n}}\mkern-36mu  \dd^n\vb u  \prod_{l=1}^m\left[-\Gamma f_{\sigma(l)}(u_{\mu(l)})e^{-\Gamma(t_{I_l} - u_{\mu(l)})/2} \Theta(t_{I_l} \geq u_{\mu(l)} \geq t_{I_{l-1}})\right]\cdot \prod_{l=1}^{n-m}f_{\sigma(l+m)}(u_{\mu(l+m)})\delta(t_{\Bar I_l} - u_{\mu(l+m)}).
\end{align}

Finally, with the above replacement, the sum over $\mu\in\mathfrak S_n$ has the effect of summing over all ordered possible domains $u_{1}\leq \ldots \leq u_{n}$, thus recovering an integral over the whole $\mathbb R^n$ space. We thus have that
\begin{align*}
    L_{\sigma, I}=& \int_{\mathbb{R}^n} \dd^n\vb u  \prod_{l=1}^m\left[-\Gamma f_{\sigma(l)}(u_{l})e^{-\Gamma(t_{I_l} - u_{l})/2} \Theta(t_{I_l} \geq u_{l} \geq t_{I_{l-1}})\right]\cdot \prod_{l=1}^{n-m}f_{\sigma(l+m)}(u_{l+m})\delta(t_{\Bar I_l} - u_{l+m})\\
    =& \prod_{l=1}^m \int_{\mathbb{R}} \dd u \left[-\Gamma f_{\sigma(l)}(u)e^{-\Gamma(t_{I_l} - u)/2} \Theta(t_{I_l} \geq u \geq t_{I_{l-1}})\right]\cdot \prod_{l=1}^{n-m}\int_{\mathbb{R}^n}\dd u f_{\sigma(l+m)}(u)\delta(t_{\Bar I_l} - u)\\
    =& \prod_{l=1}^m F_{\sigma(l)}(t_{I_l}, t_{I_{l-1}})\cdot \prod_{l=1}^{n-m} f_{\sigma(l+m)}(t_{\Bar I_l}),\\
\end{align*}
which concludes the proof.
\end{proof}

Let us now return to Eq.~\eqref{eq:derivation1}. Using Lemma \ref{onlylemma}, we arrive at the final result, namely
\begin{equation}
    \Psi_{\rm out}(\vb t) = \frac{\mathcal N}{\sqrt{n!}}\sum_{\sigma \in\mathfrak{S}_n}\sum_{m=0}^n  \sum_{I\in \mathcal P(m)}\prod_{l=1}^m F_{\sigma(l)}(t_{I_l}, t_{I_{l-1}})\cdot \prod_{l=1}^{n-m} f_{\sigma(l+m)}(t_{\Bar I_l}).
\end{equation}
In particular, if we assume that $f_1 = \ldots = f_n = f$, then the expression simplifies even more, as
\begin{equation}
    \Psi_{\rm out}(\vb t) = \sum_{m=0}^n  \sum_{I\in \mathcal P(m)}\prod_{l=1}^m F(t_{I_l}, t_{I_{l-1}})\cdot \prod_{l=1}^{n-m} f(t_{\Bar I_l}),
\end{equation}
which is Eq.~\eqref{eq:output_n_photons} in the main text.

\end{widetext}

\section{Explicit expression for the output state with a Lorentzian input}
\label{appendix:explicit_expression_lorentzian}

An explicit expression for the output wavefunction of two photons in mode $f_L$ is thus obtained by plugging Eq.~\eqref{eq:lorentzian_time} into Eq.~\eqref{eq:output_two_photon}. One finds
\begin{widetext}
\begin{align}
    \label{eq:output_lorentzian}
    \Psi_{\rm out}(t,t') =& \frac{\gamma}2 e^{-\gamma(|t|+|t'|)/2}\nonumber\\
    &\times
    \begin{cases}
        \displaystyle 1+\frac{4\Gamma\gamma}{(\gamma-\Gamma)^2}\left[1-e^{(\gamma-\Gamma)t'/2}-\frac{(\gamma-3\Gamma)}{(\gamma+\Gamma)}e^{(\gamma-\Gamma)t/2}-\frac\Gamma\gamma e^{(\gamma-\Gamma)(t-t')/2}\right] &\text{if } t>t'>0,\\
        \displaystyle 1-\frac{4\Gamma\gamma}{(\gamma+\Gamma)^2}\left[e^{(\gamma-\Gamma)t/2}+\frac\Gamma\gamma e^{((\gamma+\Gamma)t'+(\gamma-\Gamma)t)/2}\right] &\text{if }  t>0>t',\\
        \displaystyle 1-\frac{4\Gamma\gamma}{(\gamma+\Gamma)^2}\left[1+\frac\Gamma\gamma e^{(\gamma+\Gamma)(t'-t)/2}\right] &\text{if } 0>t>t',
    \end{cases}
\end{align}
\end{widetext}
and, for the remaining regions, we just have to swap $t\leftrightarrow t'$. We observe that, despite the presence of a $\gamma-\Gamma$ in the denominator in the first line of Eq.~\eqref{eq:output_lorentzian}, the wavefunction is well-defined in the limit $\gamma\to\Gamma$.

\section{Output state expression and analysis details with a Gaussian input}
\label{appendix:explicit_expression_gaussian}
Similarly, in the case of the Gaussian mode, we plug Eq.~\eqref{eq:gaussian_in} into Eq.~\eqref{eq:output_two_photon}. The obtained output wavefunction reads
\begin{align}
    \label{eq:state_out_gaussian}
    &\Psi_{\rm out}(t,t')=f_G(t)f_G(t')\nonumber\\\times&\left[1-\frac{\Gamma}{\gamma}\sqrt{\frac{\pi}2}(\mathrm{erfcx}(z)+\mathrm{erfcx}(z'))\right.\nonumber\\&\left.+\frac{\Gamma^2}{\gamma^2}\frac{\pi}2\left(\mathrm{erfcx}(z')-e^{z^2-{z'}^2}\mathrm{erfcx}(z)\right)\mathrm{erfcx}(z)\right],\;\mathrm{if}\;t\geq t',
\end{align}
where $z=(\Gamma-2\sigma^2t)/(2\sigma\sqrt 2)$, with an analogous definition for $z'$, and $\mathrm{erfcx}(x)=e^{x^2}(1-\mathrm{erf}(x))$, where $\mathrm{erf}(x)$ is the error function. In the case of $t'>t$, one has to swap the time arguments in Eq.~\eqref{eq:state_out_gaussian}.

\begin{figure}
    \centering
    \includegraphics[width=\linewidth]{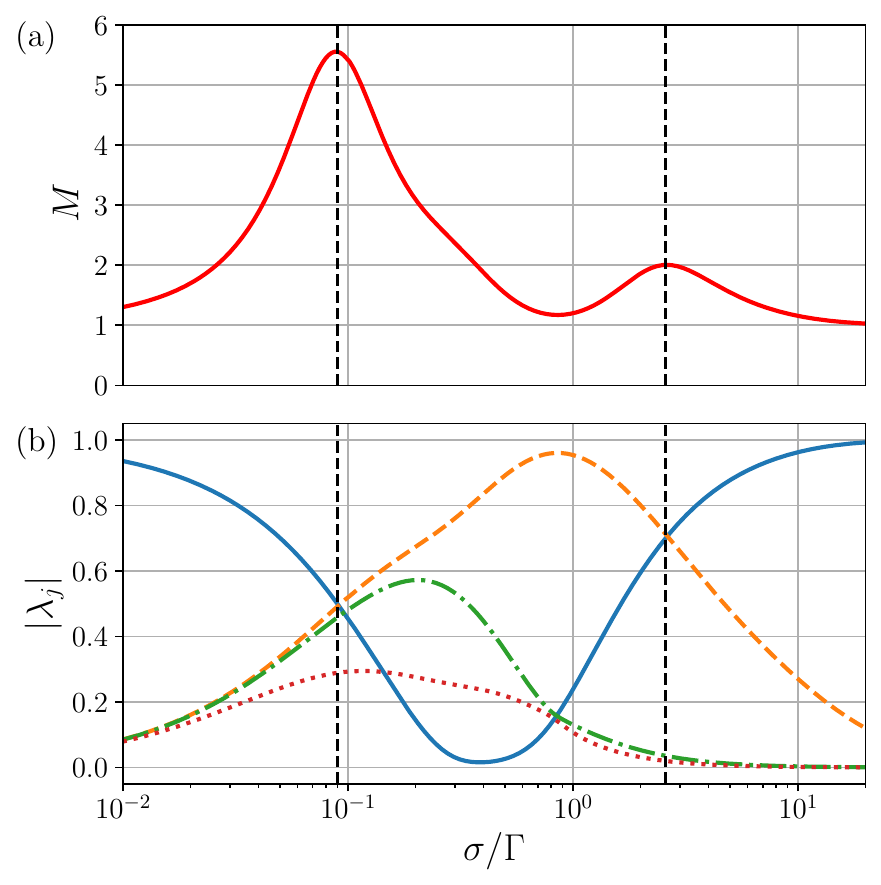}
    \caption{Analysis of the modal content of the output state for the Gaussian input. The quantities plotted in (a) and (b) are analogous to the ones in Fig.~\eqref{fig:number_and_singular_values}, with the difference that $\sigma$ is the Gaussian's width, defined by Eq \eqref{eq:gaussian_in}. The dashed vertical black lines highlight the local maxima of $M$: $M\approx 5.56$ at $\sigma\approx 0.09\;\Gamma$, and $M\approx 2$ at $\sigma\approx2.59\;\Gamma$.}
    \label{fig:number_and_sv_gaussian}
\end{figure}

Finally, we show in Fig.~\ref{fig:number_and_sv_gaussian} the output mode analysis for the Gaussian input, i.e., the plot of the mode number $M$ [see Eq.~\eqref{eq:mode_number}] and the first four eigenvalues $\lambda_j$ as a function of the Gaussian width $\sigma$ [cf. Fig.~\eqref{fig:number_and_singular_values}].

By numerically diagonalizing the output state corresponding to the Gaussian input, we computed the eigenvalues and the average number of modes $M$, both as a function of the width $\sigma$. We again found that $M$ has two local maxima, with the values $M_1\approx 5.56$ and $M_2\approx 2$, where the second deviates from $2$ by $0.35\%$. These peaks, in turn, are located at $\sigma_1\approx0.09\;\Gamma$ and $\sigma_2\approx 2.59\;\Gamma$. Regarding the eigenvalues, the Gaussian case is slightly different from the Lorentzian case: the two values of $\sigma$ for which $\lambda_2=-\lambda_1$ are $\sigma'_1=\sigma_1$ (up to our numerical precision) and $\sigma_2=2.68\;\Gamma\neq\sigma'_2$.

Despite this discrepancy, we still have that $|\sigma'_2-\sigma_2|$ is much smaller than the characteristic width of the peak associated with $M_2$. This property, together with the fact that, at $\sigma_2$, $|\lambda_1|=|\lambda_2|$ are much larger than all the other eigenvalues, still allows us to well-approximate the output state by the decompositions in Eq.~\eqref{eq:takagi_two_modes} and~\eqref{eq:unsyn_two_modes}. Similarly as before, we observed that the mode defined by Eq.~\eqref{eq:rotated_basis_2} is very close to the input mode, $\phi_2(t)\approx G(t)$, this time with a residual sum of squares of $0.12\%$, thus providing an even better approximation compared to the Lorentzian case.

\bibliographystyle{unsrt}
\bibliography{Paper-files/paper_bibliography}

\end{document}